\newcommand{\abs}[1]{\ensuremath{\left|#1\right|}}
\newcommand{\bra}[1]{\langle #1 |}
\newcommand{\ket}[1]{| #1 \rangle}
\newcommand{\ketbra}[2]{| #1 \rangle \langle #2 |}
\newcommand{\braket}[2]{\langle #1 | #2 \rangle}
\newcommand{\eg}{\emph{eg.}}
\newcommand{\Cplx}{\mathbb{C}}
\newcommand{\1}{{\rm 1\hspace{-0.9mm}l}}
\newtheorem{theorem}{Theorem}
\newtheorem{corollary}{Corollary}
\newcommand{\envelope}{(\raisebox{-.5pt}{\scalebox{1.45}{\Letter}}\kern-1.7pt)}
\begin{document}
\title{Increasing the security of the ping-pong protocol by using many mutually unbiased bases}

\author{Piotr Zawadzki}
\email{piotr.zawadzki@polsl.pl}
\affiliation{Institute of Electronics, Faculty of Automatic Control, Electronics
and Computer Science, Silesian University of Technology, Akademicka 16, 44-100
Gliwice}

\author{Zbigniew Pucha{\l}a}
\email{z.puchala@iitis.pl}
\author{Jaros{\l}aw Adam Miszczak}
\email{miszczak@iitis.pl} 
\affiliation{Institute of Theoretical and Applied Informatics, Polish Academy
of Sciences, Ba{\l}tycka 5, 44-100 Gliwice, Poland}

\begin{abstract}
In this paper we propose an extended version of the ping-pong protocol and study
its security. The proposed protocol incorporates the usage of mutually unbiased
bases in the control mode. We show that, by increasing the number of bases, it
is possible to improve the security of this protocol. We also provide the upper
bounds on eavesdropping average non-detection probability and propose a control
mode modification that increases the attack detection probability. 
\end{abstract}

\keywords{quantum cryptography, quantum secure direct communication, ping-pong protocol}
\pacs{03.67.Dd, 03.67.Hk, 03.65.Ud}

\maketitle

%
%
%
 
\date{28/02/2012 (v. 0.40)}

\section{Introduction}
A method of quantum secure direct communication (QSDC), contrary to quantum key
distribution (QKD) schemes, offers the confidential exchange of deterministic
messages without key agreement~\cite{korchenko10modern}. The interest in this fascinating idea started a
decade ago in seminal papers of~Beige~\emph{et.al.}~\cite{Beige-QSDC} and
Bostr\"om~\emph{et.al.}~\cite{Bostrom-pingpong-PhysRevLett.89.187902}. Since
then QSDC techniques have been developed following two different paradigms:
exploiting indistinguishability of non-orthogonal quantum
states~\cite{Beige-QSDC,ChinPhysLett.21.601,Luca-PhysRevLett.94.140501,OptCommun.283.1984,Wang2006256}
and based on entanglement of signal particles with a~system inaccessible to the
eavesdropper~\cite{Bostrom-pingpong-PhysRevLett.89.187902,Deng2006359,PhysRevA.80.022323,Zhan20094633}.
The protocols from the former family are usually simpler to implement at the
price of classic channel utilization in message mode, although exceptions of
this rule exist~\cite{Luca-PhysRevLett.94.140501}. On the other hand, the
entanglement based ping-pong protocol uses classic channel only in control
mode~\cite{Bostrom-pingpong-PhysRevLett.89.187902}. This feature can be
exploited to build and additional cryptographic security layer which improves
security of the protocol~\cite{vasiliu-pppa,pz:qinp-privacy-2012}. The ping-pong
protocol has been also improved and extended in other directions including
super-dense information
coding~\cite{Cai-pingpong-superdense-PhysRevA.69.054301,Wang-qudits} and its
variants based on higher dimensional signal
particles~\cite{Vasiliu-qutrits,zawadzki11security}. However, in the analyses of
higher dimensional variants it was assumed that control mode is executed in at
most two dual bases. This possibly understates an eavesdropping detectability.

The main aim of this paper is to show that, by increasing the number of bases
used in the control mode, it is possible to decrease an upper bound of the
attack non-detection probability. Eavesdropping is most effectively detected if
subsequent tests are executed in randomly selected mutually unbiased bases
(MUB)~\cite{durt10onmubs}. Unfortunately, the problem of finding MUB for the
arbitrary Hilbert space remains unsolved and constructive solutions exist only
for spaces of dimension $N=p^m$ where $p$ is
prime~\cite{durt2005mutually,eusebi09Deterministic} and/or spaces with dimension
not exceeding six~\cite{brierley-2010-10,mcnulty}.

\section{Preliminaries}

\subsection{Mutually Unbiased Bases}
A sequence of orthonormal bases $\{\mathcal{B}^{(0)},\mathcal{B}^{(1)}, \dots,
\mathcal{B}^{(M)}\}$ of $\Cplx^N$ is called MUB if, for any two elements
$\ket{b_k^{(m)}}\in\mathcal{B}^{(m)}$, $\ket{b_l^{(n)}}\in\mathcal{B}^{(n)}$,
the following condition holds
\begin{equation}\label{eq:MUBS-condition}
\abs{\braket{b_k^{(m)}}{b_l^{(n)}}}^2 = \delta_{m,n}\delta_{k,l} + \frac{1}{N} \left(1-\delta_{m,n}\right),
\end{equation}
where $N$ denotes the dimension of underlying Hilbert space. The explicit
construction of MUB is only known in the case of dimension $N=p^m$, where $p$
is a prime and $m$ is a positive integer~\cite{durt2005mutually}.
For an odd prime $p$ we have~\cite{durt2005mutually}
\begin{equation}\label{eq:MUBS-expression}
\begin{split}
\ket{b_k^{(l)}}
&= \sum\limits_{q=0}^{N-1} B^{(l)}_{k,q} \ket{b_q^{(0)}}
\\
&= \frac{1}{\sqrt{N}} \sum\limits_{q=0}^{N-1} \omega^{\ominus k \odot q} \omega^{(l-1)\odot q \odot q \oslash 2} \ket{b_q^{(0)}},
\end{split}
\end{equation}
where $\omega = e^{2\pi i/N}$, circled operations $\odot$, $\oslash$, $\ominus$
denote multiplication, division and subtraction in the finite field
$\mathrm{GF}(p^m)$ respectively and $\ket{b_q^{(0)}}$ are vectors of
computational basis. In the case of $p=2$ the explicit formulas for the MUB
elements are more involved~\cite{eusebi09Deterministic}.

\subsection{Ping-pong protocol operation}
Bob, the recipient of information,
prepares an EPR pair composed of qudits~\cite{Durt-pingpong-qudits-PhysRevA.69.032313}
\begin{equation}
\ket{\psi_{0,0}}=
\frac{1}{\sqrt{N}}\sum\limits_{k=0}^{N-1} \ket{b^{(0)}_k}\ket{b^{(0)}_k}.
\end{equation}
One of the qudits, referred to as 'home', is kept confidential, while the second
one, called 'travel', is sent to Alice. Because of an entanglement, Alice's
manipulations on the travel qudit induce non-local effects. Alice is able to
encode $2\log_2N$ bits of information per one protocol cycle applying one of the
unitary transformations
\begin{equation}\label{pz:eq:encoding-operation}
U_{\mu,\nu} = \sum\limits_{k=0}^{N-1} \omega^{\mu k} \ketbra{b^{(0)}_{k+\nu}}{b^{(0)}_{k}},
\end{equation}
where $\mu,\nu=0, \dots, N-1$. Operator~\eqref{pz:eq:encoding-operation}
transforms the initial state into another EPR pair
$\ket{\psi_{\mu,\nu}}$~\cite{Liu-Qudit-superdense-coding-PhysRevA.65.022304}
which can be unambiguously discriminated by Bob when the 'travel' qudit is
returned by Alice. Eavesdropping Eve cannot distinguish the~travel qudit on its
way forth and back from a maximally mixed state
\begin{equation}\label{pz:eq:travel-mixture}
\rho_t = \frac{1}{N} \sum\limits_{\alpha=0}^{N-1} \ketbra{b^{(0)}_\alpha}{b^{(0)}_\alpha},
\end{equation}
so this way she cannot infer any information about the encoding operation used
by Alice. Because of that indistinguishability, further analysis can be carried
out as if Bob sent one of the~randomly selected
states~$\ket{b^{(0)}_\alpha}$~\cite{Bostrom-pingpong-PhysRevLett.89.187902}.
However, Eve can entangle the 'travel' qudit with some ancilla system before it
reaches Alice
\begin{equation}\label{pz:eq:noncoherent-attack}
\ket{\psi_\alpha} = A\ket{b^{(0)}_\alpha,\phi} = \sum\limits_{l=0}^{N-1} a_{\alpha,l}\ket{b^{(0)}_l,\phi_{\alpha,l}},
\end{equation}
where $\alpha=0,\ldots,N-1$ and $\ket{\phi_{\alpha,l}}$ denotes Eve's probe
states. That way, because of the introduced entanglement, Alice's encoding
operation also modifies the state of the ancilla. By inspection of the ancilla's
state Eve can gain some information about the encoded message. On the other
hand, Eve's attack operation inevitably breaks the perfect correlation of the
'travel' and 'home' qudits, and that violation can be detected when Alice and
Bob switch to control mode in which they perform local measurements on the
possessed qudits and classically communicate their results. Unfortunately, the
control mode executed only in computational basis is insufficient, as Eve can
mount an undetectable attack in which she can infer half the information posted
by Alice~\cite{Vasiliu-qutrits,zawadzki11security}. It has been also shown
in~\cite{zawadzki11security} that the incorporation of dual basis removes such
possibility. The question how protocol detectability can be improved by taking
into account all possible mutually unbiased bases remains open.

Without loss of generality it may be assumed that 
Bob sends a state $\ket{\alpha}$~\cite{Bostrom-pingpong-PhysRevLett.89.187902,Vasiliu-qutrits}.
It follows from \eqref{pz:eq:noncoherent-attack} that
$p^{(0)}_\alpha=\abs{a_{\alpha,\alpha}}^2$ describes the non-detection probability when
computational basis $\mathcal{B}^{(0)}$ is used in control mode.
If Alice selects another basis $\mathcal{B}^{(m)}$ then the 'travel' qudit after attack
is seen as
\begin{equation}\label{pz:eq:noncoherent-attack-basis}
\ket{\psi_\alpha} = A\ket{b^{(0)}_\alpha,\phi} 
=
\sum\limits_{k=0}^{N-1}
c_{\alpha,k}
\ket{b^{(m)}_k,\phi_{\alpha,l}},
\end{equation}
where 
$c_{\alpha,k} = \sum\limits_{l=0}^{N-1} a_{\alpha,l} \braket{b^{(m)}_k}{b^{(0)}_l}$.
The attack is not detected in the basis $\mathcal{B}^{(m)}$ with probability
\begin{equation}
p^{(m)}_\alpha = \abs{c_{\alpha,\alpha}}^2
= \abs{\braket{b^{(m)}_\alpha}{a_{\alpha,:}}}^2,
\end{equation}
where $\ket{a_{\alpha,:}}=\sum\limits_{l=0}^{N-1} a_{\alpha,l} \ket{b^{(0)}_l}$.
The non-detection probability averaged over multiple control mode cycles
is given by
\begin{equation}
\label{eq:non-detection-def}
d_\alpha = \sum\limits_{m=0}^{M-1} q_m p^{(m)}_\alpha,
\end{equation}
where $M$ is the number of bases and $q_m$ describes relative frequency of their
selection.

It should be shown for completeness that in the control mode
Bob can unambiguously infer Alice's local measurement result
as long as he is informed about the used basis.
This follows from the fact that, as the local change of basis does not influence
the entanglement, the measurement performed by Alice fully determines the
outcome of Bob's measurement.
Let us suppose that Alice performed a measurement in the basis $\mathcal{B}$ and
obtained symbol $i$. In this case, the state of the system, after the projective
measurement, reads 
\begin{equation}
\begin{split}
\left(\ketbra{U_i}{U_i} \otimes \1 \right) 
\left( \sum_{k} \ket{k} \otimes \ket{k} \right) 
&=
\ket{U_i} \otimes \left(\sum_{k} \braket{U_i}{k}\ket{k} \right) \\
&=
\ket{U_i} \otimes \ket{\overline{U_i}},
\end{split}
\end{equation}
where $U_i$ is $i^{\text{th}}$ vector of the basis $\mathcal{B}$. From the above
one can notice that, if Bob performs a measurement in the basis
$\overline{\mathcal{B}}$, he will obtain the symbol $i$ with probability~1.

\section{Bounds on the non-detection probability}
Let us begin with general theorem concerning non-detection probability.
\begin{theorem}\label{th:max-prob-svd}
Let $\{\mathcal{B}^{(0)},\mathcal{B}^{(1)}, \dots, \mathcal{B}^{(M)}\}$ be a set
of $M+1$ orthonormal bases, used in the control mode of the protocol and
selected equally frequently. Then, the upper bound for the average non-detection
probability~\eqref{eq:non-detection-def} is given by
\begin{equation}\label{eq:dmax-singular}
d_\alpha \le \frac{1}{M+1} \sigma_1^2(V^{(\alpha)}),
\end{equation}
where $\sigma_1(V^{(\alpha)})$ denotes the greatest singular value of 
 $V^{(\alpha)}= \left\{\overline{B}^{(i)}_{\alpha,j} \right\}_{ij}$.
\end{theorem}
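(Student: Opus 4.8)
The plan is to observe that, once the $M+1$ bases are used with equal frequency $q_m = 1/(M+1)$, the averaged non-detection probability \eqref{eq:non-detection-def} collapses to the squared Euclidean norm of a single fixed vector acted on by a single linear map, and then to bound that norm by the operator norm of the map.

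First I would substitute $q_m = 1/(M+1)$ into \eqref{eq:non-detection-def}, giving $d_\alpha = \frac{1}{M+1}\sum_{m=0}^{M} p^{(m)}_\alpha$. Then, expanding each control-mode vector in the computational basis as $\ket{b^{(m)}_\alpha} = \sum_{q} B^{(m)}_{\alpha,q}\ket{b^{(0)}_q}$ (the first line of \eqref{eq:MUBS-expression}), the overlap defining $p^{(m)}_\alpha$ becomes $\braket{b^{(m)}_\alpha}{a_{\alpha,:}} = \sum_q \overline{B^{(m)}_{\alpha,q}}\, a_{\alpha,q}$, which is precisely the $m$-th component of $V^{(\alpha)}\ket{a_{\alpha,:}}$ with $V^{(\alpha)} = \{\overline{B^{(i)}_{\alpha,j}}\}_{ij}$ the $(M+1)\times N$ matrix of the statement. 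Hence $\sum_{m=0}^{M} p^{(m)}_\alpha = \| V^{(\alpha)}\ket{a_{\alpha,:}}\|^2$ and $d_\alpha = \frac{1}{M+1}\,\| V^{(\alpha)}\ket{a_{\alpha,:}}\|^2$.

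Next I would apply the variational bound $\| V^{(\alpha)}\ket{a_{\alpha,:}}\| \le \sigma_1(V^{(\alpha)})\,\|\ket{a_{\alpha,:}}\|$ and use $\|\ket{a_{\alpha,:}}\| = 1$. The latter is where the physical content enters: since $A$ is unitary and the probe states $\ket{\phi_{\alpha,l}}$ are normalized, orthogonality of the $\ket{b^{(0)}_l}$ annihilates the cross terms in $\braket{\psi_\alpha}{\psi_\alpha}$, leaving $\sum_l \abs{a_{\alpha,l}}^2 = 1$. Combining the two estimates yields $d_\alpha \le \frac{1}{M+1}\sigma_1^2(V^{(\alpha)})$, i.e.\ \eqref{eq:dmax-singular}.

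I expect the only genuinely delicate step to be the normalization $\|\ket{a_{\alpha,:}}\| = 1$ (or at worst $\le 1$, which would still suffice); everything else is linear algebra valid for an arbitrary family of orthonormal bases, not specifically MUB. It is also worth noting that equality holds in \eqref{eq:dmax-singular} exactly when $\ket{a_{\alpha,:}}$ is proportional to the leading right singular vector of $V^{(\alpha)}$, which anticipates the characterization of Eve's optimal attack.
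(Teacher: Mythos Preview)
Your proposal is correct and follows essentially the same route as the paper: rewrite $d_\alpha$ as $\frac{1}{M+1}\|V^{(\alpha)}\ket{a_{\alpha,:}}\|^2$ and bound by the operator norm. You are in fact more explicit than the paper, which leaves the normalization $\|\ket{a_{\alpha,:}}\|=1$ implicit in the line $\max_{\ket{x}}\|V^{(\alpha)}\ket{x}\|^2=\sigma_1^2(V^{(\alpha)})$; your justification via unitarity of $A$ and normalization of the probe states $\ket{\phi_{\alpha,l}}$ is the right way to fill that gap.
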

\begin{proof}
Let us denote by $v_m$ the $\alpha^{\text{th}}$ element of $\mathcal{B}^{(m)}$. 
By $V^{(\alpha)}$ we denote a matrix with rows given by bra vectors $\bra{v_m}$,
\emph{i.e.} $V^{(\alpha)}_{m,j}
 = \overline{B}^{(m)}_{\alpha,j}$ (overline denotes complex conjugate).
If control bases are selected equally frequently
the average non-detection probability~\eqref{eq:non-detection-def} can
be written as
\begin{equation}\label{eq:non-detection-matrix}
d_{\alpha} = \frac{1}{M+1} \left\| V^{(\alpha)} \ket{a_{\alpha,:}}\right\|^2,
\end{equation}
and since $\max_{\ket{x}}\left\| V^{(\alpha)} \ket{x}\right\|^2 = \sigma_1^2(V^{(\alpha)})$
we obtain the result.
\end{proof}

Let us now assume that the control mode is executed in $M+1$ mutually unbiased
bases. In this case the upper bound on the non-detection probability is stated
in the following theorem.
\begin{theorem}\label{th:M-MUBs-bound}
If the control mode is executed in $M+1$ mutually unbiased bases, then the 
average non-detection probability is bounded by
\begin{equation}\label{eq:M-MUBs-bound}
d_{\alpha} \leq \frac{1+M/\sqrt{N}}{1 + M}.
\end{equation}
\end{theorem}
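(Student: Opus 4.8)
The plan is to bound $\sigma_1^2(V^{(\alpha)})$ from Theorem~\ref{th:max-prob-svd} using the extra structure that the $M+1$ bases are mutually unbiased. First I would observe that $\sigma_1^2(V^{(\alpha)})$ equals the largest eigenvalue of the Gram matrix $G = V^{(\alpha)} (V^{(\alpha)})^\dagger$, whose $(m,n)$ entry is the inner product $\braket{v_m}{v_n}$ of the $\alpha^{\text{th}}$ basis vectors of $\mathcal{B}^{(m)}$ and $\mathcal{B}^{(n)}$. The MUB condition~\eqref{eq:MUBS-condition} does not directly control these inner products (only their moduli), so the matrix $G$ itself is not pinned down; but the key point is that its diagonal entries are all $1$ and its off-diagonal entries have modulus exactly $1/\sqrt N$.

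The main step is then a bound on the top eigenvalue of a Hermitian matrix with unit diagonal whose off-diagonal entries are bounded in modulus. I would use the crude but sufficient estimate $\sigma_1^2(V^{(\alpha)}) = \lambda_{\max}(G) \le \max_m \sum_n |G_{m,n}|$ (Gershgorin / the fact that the operator norm is at most the maximum absolute row sum). Since $G$ is $(M+1)\times(M+1)$ with one diagonal entry equal to $1$ and $M$ off-diagonal entries each of modulus $1/\sqrt N$ in every row, this gives $\sigma_1^2(V^{(\alpha)}) \le 1 + M/\sqrt N$. Substituting into~\eqref{eq:dmax-singular} yields
\begin{equation}
d_\alpha \le \frac{1}{M+1}\left(1 + \frac{M}{\sqrt N}\right) = \frac{1 + M/\sqrt N}{1+M},
\end{equation}
which is exactly~\eqref{eq:M-MUBs-bound}.

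The part requiring the most care is justifying the spectral estimate cleanly — one must be sure the absolute-row-sum bound applies to a complex Hermitian matrix (it does, as a special case of $\|G\|_2 \le \|G\|_\infty$ for the induced norms, or via Gershgorin discs centered at the real diagonal entry $1$ with radius $M/\sqrt N$). One should also note that the bound uses only the \emph{upper} bound $1/\sqrt N$ on the off-diagonal moduli, so it holds for any set of bases that are pairwise "at least as spread out" as MUB; for genuine MUB the off-diagonal moduli are exactly $1/\sqrt N$, but the inequality direction is still all that is needed. No attempt is made here to show the bound is tight, which would require exhibiting an attack vector $\ket{a_{\alpha,:}}$ aligning with the top singular vector of $V^{(\alpha)}$.
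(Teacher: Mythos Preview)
Your proof is correct and follows essentially the same route as the paper: form the Gram matrix $G=V^{(\alpha)}(V^{(\alpha)})^\dagger$, read off from the MUB condition that its diagonal entries are $1$ and its off-diagonal entries have modulus $1/\sqrt{N}$, bound $\sigma_1^2(V^{(\alpha)})=\lambda_{\max}(G)$ by the maximal absolute row sum $1+M/\sqrt{N}$, and plug into Theorem~\ref{th:max-prob-svd}. The only cosmetic difference is that the paper invokes Schur's inequality $\sigma_1(W)\le\bigl((\max_i\sum_j|W_{i,j}|)(\max_j\sum_i|W_{i,j}|)\bigr)^{1/2}$ rather than Gershgorin or $\|G\|_2\le\|G\|_\infty$, but for a Hermitian matrix these all coincide.
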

\begin{proof}
Let us introduce a matrix $W =V^{(\alpha)} V^{(\alpha)}{}^{\dagger}$
where $V^{(\alpha)}$ is defined as in the proof of Theorem~\ref{th:max-prob-svd}.
Directly from the definition of matrix $V^{(\alpha)}$
and MUB condition~\eqref{eq:MUBS-condition}
we get
\begin{equation}\label{eqn:elements_w}
W_{i,j} = \braket{v_i}{v_j}=\{\delta_{i,j} + (1-\delta_{i,j}) e^{\mathrm{i} \phi_{i,j}}/\sqrt{N} \}_{i,j=0}^{M}.
\end{equation}
Note that matrix $W$ does not depend on the particular $\alpha$. The maximal
singular value of the matrix $W$ can be bounded as \cite{schur11bemerkungen}
(see also inequality  \cite[Eq. (3.7.2)]{horn91topics})
\begin{equation}\label{eqn:schur-bound}
\sigma_1(W)  \leq \left((\max_{i} \sum_j|W_{i,j}|) (\max_{j} \sum_i|W_{i,j}|)\right)^{1/2}.
\end{equation}
Taking into account \eqref{eqn:elements_w} we get
\begin{equation}
\max_{i} \sum_j|W_{i,j}| = \max_{j} \sum_i|W_{i,j}| = 1+M/\sqrt{N},
\end{equation}
and the result follows from Theorem~\ref{th:max-prob-svd} and the fact that
$\sigma_1^2(V^{(\alpha)})=\sigma_1(W)$.
\end{proof}

In the case of dimension $N = p^m$ for prime $p$ and $m$ being a positive integer,
there exists a set of $N+1$ mutually unbiased bases~\cite{durt2005mutually}, and
the bound \eqref{eq:M-MUBs-bound} reads
\begin{equation}\label{eq:N-MUBs-bound}
d_{\alpha} \leq \frac{1+\sqrt{N}}{1 + N}.
\end{equation}
It is possible to improve this bound using explicit
expression~\eqref{eq:MUBS-expression}.
\begin{theorem} \label{th:odd-prime^m}
Let $N = p^m$ where $p$ is an odd prime and $m$ is a positive integer.
Then the maximal non-detection probability is bounded by 
\begin{equation}\label{eq:N-odd-MUBs-bound}
d_{\alpha} \le \frac{3}{1+N}.
\end{equation}
\end{theorem}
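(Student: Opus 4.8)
\noindent\emph{Proof plan.} The plan is to apply Theorem~\ref{th:max-prob-svd} with $M=N$ (a complete set of $N+1$ mutually unbiased bases exists), so that it is enough to show $\sigma_1^2(V^{(\alpha)})\le 3$, and to exploit the explicit expression~\eqref{eq:MUBS-expression} to \emph{compute} the relevant operator rather than merely bound it. Let $v_0,\dots,v_N$ be the $\alpha^{\text{th}}$ vectors of the $N+1$ bases, so that the rows of $V^{(\alpha)}$ are the $\bra{v_m}$. Then $V^{(\alpha)}{}^{\dagger}V^{(\alpha)}=\sum_{m=0}^{N}\ketbra{v_m}{v_m}=:F$, and since $F$ has the same non-zero eigenvalues as $W=V^{(\alpha)}V^{(\alpha)}{}^{\dagger}$ we have $\sigma_1^2(V^{(\alpha)})=\sigma_1(W)=\|F\|$. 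Ordering the bases so that $\mathcal{B}^{(0)}$ is the computational basis and $\mathcal{B}^{(1)},\dots,\mathcal{B}^{(N)}$ are the bases~\eqref{eq:MUBS-expression}, this splits $F=\ketbra{b^{(0)}_\alpha}{b^{(0)}_\alpha}+\sum_{l=1}^{N}\ketbra{b^{(l)}_\alpha}{b^{(l)}_\alpha}$, and I would bound the norm of each summand.

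The first term has norm $1$. For the second, inserting~\eqref{eq:MUBS-expression}, the coefficient of $\ketbra{b^{(0)}_q}{b^{(0)}_{q'}}$ in $\sum_{l=1}^{N}\ketbra{b^{(l)}_\alpha}{b^{(l)}_\alpha}$ equals $\frac1N\,\omega^{\ominus\alpha\odot(q\ominus q')}\sum_{l=1}^{N}\omega^{(l-1)\odot(q\odot q\ominus q'\odot q')\oslash 2}$, and the inner sum equals $N$ when $q\odot q=q'\odot q'$ and $0$ otherwise, by the orthogonality $\sum_{s\in\mathrm{GF}(p^m)}\omega^{s\odot t}=N\delta_{t,0}$ of additive characters. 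Since $p$ is odd, $q\odot q=q'\odot q'$ forces $q'=q$ or $q'=\ominus q$; the diagonal part contributes $\1$ and the reflected part a Hermitian operator $P_\alpha=\sum_{q\neq 0}\omega^{\ominus(1\oplus1)\odot\alpha\odot q}\ketbra{b^{(0)}_q}{b^{(0)}_{\ominus q}}$ with $P_\alpha^{2}\le\1$, hence $\|P_\alpha\|\le 1$. Therefore $\|F\|\le\|\ketbra{b^{(0)}_\alpha}{b^{(0)}_\alpha}\|+\|\1\|+\|P_\alpha\|\le 1+1+1=3$, and feeding $\sigma_1^2(V^{(\alpha)})\le 3$ and $M+1=N+1$ into~\eqref{eq:dmax-singular} gives $d_\alpha\le 3/(1+N)$. (Equivalently one may keep $W$: its lower-right $N\times N$ block is the Gram matrix of $\ket{b^{(1)}_\alpha},\dots,\ket{b^{(N)}_\alpha}$, which is circulant over $(\mathrm{GF}(p^m),+)$ with eigenvalues in $\{0,1,2\}$, while the remaining row and column of overlaps with $\ket{b^{(0)}_\alpha}$ have norm $1$.)

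The step I expect to be the main obstacle is the evaluation of the sum over the $N$ bases~\eqref{eq:MUBS-expression}, i.e.\ handling the finite-field arithmetic correctly: one must remember that $\omega^{(\cdot)}$ implicitly carries the field trace $\mathrm{GF}(p^m)\to\mathrm{GF}(p)$ (so that $s\mapsto\omega^{s\odot t}$ is a genuine additive character whose sum vanishes off $t=0$), that $\oslash 2$ is legitimate and $q\mapsto q\odot q$ is two-to-one on $\mathrm{GF}(p^m)^{\times}$ precisely because $p\neq 2$, and that the residual operator $P_\alpha$ is a contraction. Everything after the sum over the $N$ bases collapses to $\1+P_\alpha$ is a one-line triangle-inequality estimate.
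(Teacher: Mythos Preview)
Your proof is correct and follows essentially the same route as the paper: both form $W=V^{(\alpha)}{}^{\dagger}V^{(\alpha)}$, split off the computational-basis term $P=\ketbra{b^{(0)}_\alpha}{b^{(0)}_\alpha}$, use the explicit formula~\eqref{eq:MUBS-expression} together with the character-sum identity to reduce the remaining piece $Q$ to the structure $\abs{Q_{\mu,\nu}}=\delta_{(\mu\ominus\nu)\odot(\mu\oplus\nu),0}$, and then conclude $\sigma_1(W)\le 3$ via the triangle inequality. The only cosmetic difference is the bound on $Q$: the paper applies the Schur row/column-sum inequality directly to $\abs{Q_{\mu,\nu}}$ to get $\sigma_1(Q)\le 2$, whereas you further split $Q=\1+P_\alpha$ and use $P_\alpha^2\le\1$; both arguments yield the same $2$.
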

\begin{proof}
Let us introduce matrix $W=V^{(\alpha)}{}^{\dagger} V^{(\alpha)}$ for some fixed $\alpha$
\begin{equation}
W_{\mu,\nu} 
= \sum\limits_{q=0}^{N} (V^{(\alpha)}{}^\dagger)_{\mu,q} V_{q,\nu}
= \sum\limits_{q=0}^{N} B^{(q)}_{\alpha,\mu} \overline{B}^{(q)}_{\alpha,\nu}.
\end{equation}
Matrix $W$ may be decomposed as $W=P+Q$:
\begin{align}
P_{\mu,\nu} & = B^{(0)}_{\alpha,\mu} \overline{B}^{(0)}_{\alpha,\nu} = \delta_{\alpha,\mu}\delta_{\alpha,\nu} , \\
Q_{\mu,\nu} & = \sum\limits_{q=1}^{N} B^{(q)}_{\alpha,\mu} \overline{B}^{(q)}_{\alpha,\nu}
\nonumber \\ &
= \omega^{\ominus \alpha \odot \left( \mu \ominus \nu \right)}
\frac{1}{N} \sum\limits_{q=1}^{N} \omega^{ (q-1)\odot \left(\mu \odot \mu \ominus \nu \odot \nu \right) \oslash 2}
\nonumber \\ &
= \omega^{\ominus \alpha \odot \left( \mu \ominus \nu \right)} \delta_{\mu \odot \mu \ominus \nu \odot \nu,0},
\end{align}
where we have used identities $\omega^k \omega^l = \omega^{k \oplus l}$ and
$\sum_{k=0}^{N-1} \omega^{k \odot l} = N \delta_{l,0}$
(see \eg{}~\cite{durt2005mutually}). Thus, $\abs{Q_{\mu,\nu}}=\delta_{(\mu \ominus
\nu)\odot(\mu\oplus\nu),0}$ and using bound \eqref{eqn:schur-bound} we get
$\sigma_1(Q)\le 2$. Obviously $\sigma_1(P)=1$. Thesis follows
from~\eqref{eq:dmax-singular} combined with $\sigma_1^2(V)=\sigma_1(W)$ and
inequality~\cite[Eq. (3.3.17)]{horn91topics}
\begin{equation}
\sigma_1(P+Q) \le \sigma_1(P) + \sigma_1(Q) = 3 .
\end{equation}
\end{proof}

It has been shown that usage of at least two dual bases is sufficient to ensure
asymptotic protocol security~\cite{zawadzki11security}. However, the detection
capabilities of the control mode are significantly improved when more mutually
unbiased bases are used. This follows from the comparison of the bound on
average non-detection probability obtained in~\cite{zawadzki11security} (curve
(d) on Fig.~\ref{fig:nondetection}) with the bounds obtained herein (curves (b)
to~(d)). The improvement of the protocol's detectability becomes more apparent
with the increase of the dimension of the underlying Hilbert space~-- for bound
from~\cite{zawadzki11security} we have $\lim_{N\to\infty} d_{\rm max}=1/2$ while
for~\eqref{eq:N-MUBs-bound} and~\eqref{eq:N-odd-MUBs-bound} $\lim_{N\to\infty}
d_{\rm max}=0$. Although the asymptotic behavior of the bounds
\eqref{eq:N-MUBs-bound} and~\eqref{eq:N-odd-MUBs-bound} is similar, they differ
in the provided optimality.

The comparison of the considered bounds with numerical results is~presented
in~Fig.~\ref{fig:nondetection}. It follows that bound
\eqref{eq:N-odd-MUBs-bound} is close to optimal. It was also verified that the
best fitting to numerical estimates is achieved for $\sigma^2_1(V)=\frac{1}{2}(3
+ \sqrt{5}) \approx 2.618$.

\begin{figure}
\centering
    \includegraphics[width=\columnwidth]{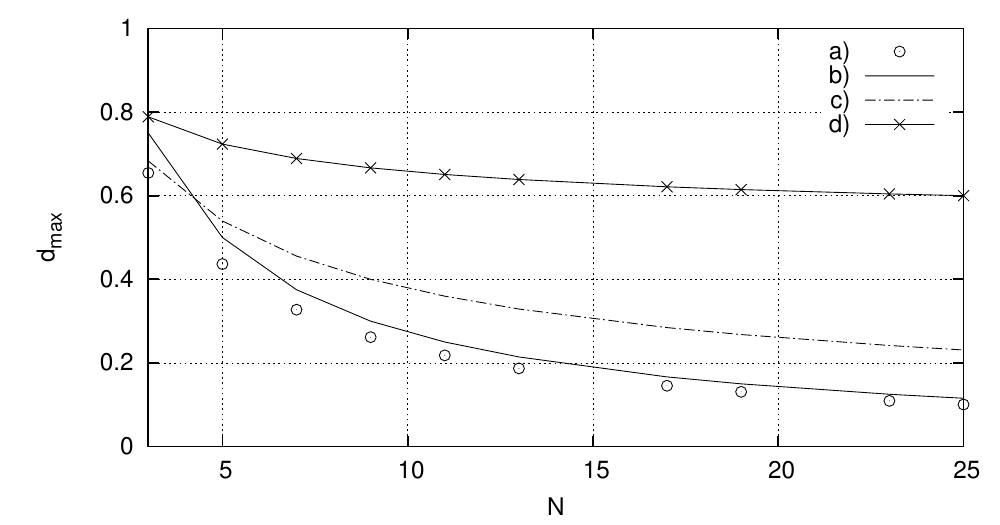}
    \caption{Comparison of upper bounds on average probability of non-detection
    calculated: a) via numerical simulations, b) with expression
    \eqref{eq:N-odd-MUBs-bound}, c) with expression \eqref{eq:N-MUBs-bound}, d)
    when only two bases are used in control mode~\cite{zawadzki11security}
    ($d_{\rm max}=(1+1/\sqrt{N})/2$).
\label{fig:nondetection}}
\end{figure}

Further improvement can be proposed based on the analysis of the proof of 
Theorem~\ref{th:odd-prime^m}. The matrix $P$ is related to the control mode
tests executed in the computational basis. If that basis is excluded from the
control mode, one obtains a better protocol behaviour. We can state the
following.

\begin{corollary}\label{th:N-odd-nc-MUBs-bound}
Let us assume, that $N=p^m$, where $p$ is an odd prime and $m$ is a positive
integer, and the computational basis is excluded from the control mode. In this
case an average non-detection probability is bounded by
\begin{equation}\label{eq:N-odd-nc-MUBs-bound}
d_{\alpha} \leq 2/N.
\end{equation} 
\end{corollary}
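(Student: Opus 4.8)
The plan is to reuse almost verbatim the machinery developed in the proof of Theorem~\ref{th:odd-prime^m}, simply discarding the term that corresponds to the computational basis. Recall that there one writes the Gram-type matrix $W = V^{(\alpha)}{}^{\dagger} V^{(\alpha)}$ as a sum $W = P + Q$, where $P_{\mu,\nu} = B^{(0)}_{\alpha,\mu}\overline{B}^{(0)}_{\alpha,\nu}$ collects the contribution of the computational basis $\mathcal{B}^{(0)}$ and $Q_{\mu,\nu} = \sum_{q=1}^{N} B^{(q)}_{\alpha,\mu}\overline{B}^{(q)}_{\alpha,\nu}$ collects the contributions of the remaining $N$ bases $\mathcal{B}^{(1)},\dots,\mathcal{B}^{(N)}$.

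First I would observe that excluding $\mathcal{B}^{(0)}$ from the control mode means that the matrix entering Theorem~\ref{th:max-prob-svd} now has only $N$ rows, one for each of the bases $\mathcal{B}^{(1)},\dots,\mathcal{B}^{(N)}$, and that for this reduced matrix $\tilde V^{(\alpha)}$ one has exactly $\tilde V^{(\alpha)}{}^{\dagger}\tilde V^{(\alpha)} = Q$, with no $P$ term present. Next, applying Theorem~\ref{th:max-prob-svd} with $M+1 = N$ equally frequently selected bases gives $d_\alpha \le \frac{1}{N}\sigma_1^2(\tilde V^{(\alpha)}) = \frac{1}{N}\sigma_1(Q)$, the last equality using, exactly as in the earlier proofs, that the squared singular values of $\tilde V^{(\alpha)}$ are the eigenvalues of the positive semidefinite matrix $Q=\tilde V^{(\alpha)}{}^{\dagger}\tilde V^{(\alpha)}$, so that $\sigma_1^2(\tilde V^{(\alpha)}) = \sigma_1(Q)$. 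Finally I would invoke the estimate $\sigma_1(Q) \le 2$ already established in the proof of Theorem~\ref{th:odd-prime^m} via the Schur bound~\eqref{eqn:schur-bound} applied to the $\{0,1\}$-valued matrix $\abs{Q_{\mu,\nu}} = \delta_{(\mu\ominus\nu)\odot(\mu\oplus\nu),0}$, which immediately yields $d_\alpha \le 2/N$.

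There is essentially no real obstacle here: the corollary is a direct consequence of having isolated the $P$ contribution in the proof of Theorem~\ref{th:odd-prime^m}, and no new computation is needed. The only points requiring a moment's care are bookkeeping ones — checking that $Q$ is genuinely unchanged when $\mathcal{B}^{(0)}$ is dropped (it is, since $P$ was precisely the $q=0$ summand and $Q$ the remaining $q=1,\dots,N$ summands), and that the prefactor in Theorem~\ref{th:max-prob-svd} correctly becomes $1/N$ rather than $1/(N+1)$ because only $N$ bases are now used in the control mode.
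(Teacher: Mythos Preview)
Your proposal is correct and follows precisely the approach the paper intends: the paper does not give a separate proof of the corollary but simply remarks, immediately before stating it, that the matrix $P$ in the decomposition $W=P+Q$ of Theorem~\ref{th:odd-prime^m} corresponds to the computational basis, so that excluding $\mathcal{B}^{(0)}$ leaves only $Q$ and the already-established bound $\sigma_1(Q)\le 2$, together with the prefactor $1/N$ from Theorem~\ref{th:max-prob-svd}, yields $d_\alpha\le 2/N$. You have filled in exactly the bookkeeping the paper leaves implicit.
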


It should be noted that, as the information is encoded and decoded in the
computational basis, Eve still has to 
use this basis for an attack preparation.
Comparing the above bound with the numerical estimate for the seminal protocol,
we observe about $30\%$ improvement in an attack detection
capabilities. 

\section{Conclusions}
In this paper we have proposed an extended version of the ping-pong protocol, which 
incorporates the usage of mutually unbiased bases in the control mode. We
provided upper bounds on eavesdropping average non-detection probability in the
proposed protocol.

If the communicating parties use $M+1$ mutually unbiased
bases in the control mode, the bound is given by the leading singular value of 
the matrix with rows given by the appropriate bra vectors.
One should note that the number $M$ of bases used in the control mode should
depend on the dimension.

If the communicating parties use particles of dimension $N = p^m$, where  $p$ is
an odd prime and $m$ is a positive integer, it is possible to provide a better
estimate. Assuming that Alice and Bob use $N+1$ bases and construct them
according to~\cite{durt2005mutually}, the non-detection probability averaged
over sufficiently many cycles never exceeds $3/(N+1)$. Eavesdropping detection
capabilities can be improved by the exclusion of the computational basis from
the control mode.

\begin{acknowledgments} 
This work was supported by the Polish National Science Centre under the research
project N~N516~475440. The authors would like to thank K.~\.Zyczkowski and P.~Gawron
for interesting discussions.
\end{acknowledgments}


\begin{thebibliography}{10}

\bibitem{Beige-QSDC}
Almut Beige, Berthold~Georg Englert, Christian Kurtsiefer, and Harald
  Weinfurter.
\newblock Secure communication with a publicly known key.
\newblock {\em Act. Phys. Pol.}, 101(3):357--368, 2002.

\bibitem{Bostrom-pingpong-PhysRevLett.89.187902}
K.~Bostr\"om and T.~Felbinger.
\newblock Deterministic secure direct communication using entanglement.
\newblock {\em Phys. Rev. Lett.}, 89(18):187902, 2002.

\bibitem{brierley-2010-10}
Stephen Brierley, Stefan Weigert, and Ingemar Bengtsson.
\newblock All mutually unbiased bases in dimensions two to five.
\newblock {\em Quant. Inf. Comput.}, 10:0803--0820, 2010.

\bibitem{ChinPhysLett.21.601}
Qing-Yu Cai and Bai-Wen Li.
\newblock Deterministic secure communication without using entanglement.
\newblock {\em Chin. Phys. Lett.}, 21(4):601--603, 2004.

\bibitem{Cai-pingpong-superdense-PhysRevA.69.054301}
Qing~Yu Cai and Bai~Wen Li.
\newblock Improving the capacity of the {Bostr\"om}-{F}elbinger protocol.
\newblock {\em Phys. Rev. Lett.}, 69(5):054301, May 2004.

\bibitem{Deng2006359}
Fu-Guo Deng, Xi-Han Li, Chun-Yan Li, Ping Zhou, and Hong-Yu Zhou.
\newblock Quantum secure direct communication network with
  einstein-podolsky-rosen pairs.
\newblock {\em Phys. Lett. A}, 359(5):359 -- 365, 2006.

\bibitem{durt2005mutually}
T.~Durt.
\newblock About mutually unbiased bases in even and odd prime power dimensions.
\newblock {\em J. Phys. A: Math. Gen.}, 38:5267, 2005.

\bibitem{durt10onmubs}
T.~Durt, B.-G. Englert, I.~Bengtsson, and K.~\.Zyczkowski.
\newblock On mutually unbiased bases.
\newblock {\em Int. J. Quantum Inf.}, 8(4):535--640, 2010.

\bibitem{Durt-pingpong-qudits-PhysRevA.69.032313}
Thomas Durt, Dagomir Kaszlikowski, Jing-Ling Chen, and L.~C. Kwek.
\newblock Security of quantum key distributions with entangled qudits.
\newblock {\em Phys. Rev. A.}, 69(3):032313, Mar 2004.

\bibitem{eusebi09Deterministic}
A.~Eusebi and S.~Mancini.
\newblock Deterministic quantum distribution of a d-ary key.
\newblock {\em Quantum Inform. Comput.}, 9(11 \& 12):950--962, 2009.

\bibitem{horn91topics}
R.~A. Horn and C.~R. Johnson.
\newblock {\em Topics in Matrix Analysis}.
\newblock Cambridge University Press, 1991.

\bibitem{korchenko10modern}
O.~Korchenko, Y.~Vasiliu, and S.~Gnatyuk.
\newblock Modern quantum technologies of information security against
  cyber‐terrorist attacks.
\newblock {\em Aviation}, 14(2):58--69, 2010.

\bibitem{Liu-Qudit-superdense-coding-PhysRevA.65.022304}
X.~S. Liu, G.~L. Long, D.~M. Tong, and Feng Li.
\newblock General scheme for superdense coding between multiparties.
\newblock {\em Phys. Rev. A}, 65(2):022304, Jan 2002.

\bibitem{Luca-PhysRevLett.94.140501}
Marco Lucamarini and Stefano Mancini.
\newblock Secure deterministic communication without entanglement.
\newblock {\em Phys. Rev. Lett.}, 94:140501, Apr 2005.

\bibitem{mcnulty}
Daniel McNulty and Stefan Weigert.
\newblock All mutually unbiased product bases in dimension six.
\newblock http://arxiv.org/abs/1111.3632v1, 2011.

\bibitem{schur11bemerkungen}
J.~Schur.
\newblock Bemerkungen zur theorie der beschr\"ankten bilinearformen mit
  unendlich vielen ver\"anderlichen.
\newblock {\em J. Reine Angew. Math.}, 1911(140):1--28, 1911.

\bibitem{OptCommun.283.1984}
Guo-Fang Shi, Xiao-Qiang Xi, Ming-Liang Hu, and Rui-Hong Yue.
\newblock Quantum secure dialogue by using single photons.
\newblock {\em Opt. Commun.}, 283(9):1984 -- 1986, 2010.

\bibitem{PhysRevA.80.022323}
Kaoru Shimizu, Kiyoshi Tamaki, and Hiroyuki Fukasaka.
\newblock Two-way protocols for quantum cryptography with a nonmaximally
  entangled qubit pair.
\newblock {\em Phys. Rev. A}, 80(2):022323, Aug 2009.

\bibitem{Vasiliu-qutrits}
Eugene~V. Vasiliu.
\newblock Non-coherent attack on the ping-pong protocol with completely
  entangled pairs of qutrits.
\newblock {\em Quantum Inf. Process.}, 10:189--202, 2011.

\bibitem{vasiliu-pppa}
E.V. Vasiliu and S.V. Nikolaenko.
\newblock Synthesis if the secure system of direct message transfer based on
  the ping-pong protocol of quantum communication.
\newblock {\em Scientific works of the Odessa national academy of
  telecommunications named after O.S. Popov}, (1):83--91, 2009.
\newblock (in Russian).

\bibitem{Wang-qudits}
Chuan Wang, Fu-Guo Deng, Yan-Song Li, Xiao-Shu Liu, and Gui~Lu Long.
\newblock Quantum secure direct communication with high-dimension quantum
  superdense coding.
\newblock {\em Phys. Rev. A}, 71:044305, 2005.

\bibitem{Wang2006256}
Jian Wang, Quan Zhang, and Chao-Jing Tang.
\newblock Quantum secure direct communication based on order rearrangement of
  single photons.
\newblock {\em Phys. Lett. A}, 358(4):256 -- 258, 2006.

\bibitem{zawadzki11security}
Piotr Zawadzki.
\newblock Security of ping-pong protocol based on pairs of completely entangled
  qudits.
\newblock {\em Quantum Inf. Process.}, 2011.
\newblock (published online).

\bibitem{pz:qinp-privacy-2012}
Piotr Zawadzki.
\newblock Improving security of the ping-pong protocol.
\newblock {\em Quantum Inf. Process.}, 2012.
\newblock (published online).

\bibitem{Zhan20094633}
You-Bang Zhan, Ling-Ling Zhang, and Qun-Yong Zhang.
\newblock Quantum secure direct communication by entangled qutrits and
  entanglement swapping.
\newblock {\em Opt. Commun.}, 282(23):4633 -- 4636, 2009.

\end{thebibliography}
\end{document}